\newif\ifAMS
\newtheorem{fact}{Fact}
\newtheorem{prop}{Proposition}
\newtheorem{theo}{Theorem}
\newtheorem{corollary}{Corollary}
\newtheorem{conj}{Conjecture}
\newtheorem{definition}{Definition}
\newcommand{\ie}{i.e.,~}
\newcommand{\nat}{\ensuremath{\mathbb{N}}}
\newcommand{\Ss}{\ensuremath{\mathbb{S}}}
\newcommand{\St}{\ensuremath{\widetilde{\mathbb{S}}}}
\newcommand{\Tt}{\ensuremath{\mathbb{T}}}
\newcommand{\Vv}{\ensuremath{{\mathbb{S}_{\infty}}}}
\newcommand{\Var}[1]{\underline{\mathsf{#1}}}
\newcommand{\nbct}[1]{T_{0,#1}}
\newcommand{\nbt}[1]{T_{\infty,#1}}
\definecolor{darkbrown}{cmyk}{.3,.75,.75,.15}
\definecolor{vertfonce}{rgb}{0,.5,0}
\definecolor{vertfonce}{rgb}{0,.5,0}
\newif\ifcomment
\title[Counting Binary Lambda Terms]{Counting Terms in the Binary Lambda Calculus} 
\author[K. Grygiel,
P. Lescanne]{Katarzyna Grygiel$^\dagger$
  \\
  \and\\
  Pierre Lescanne$^{\dagger,\ddagger}$\\\\
  $^\dagger$Jagiellonian University,\\
  Faculty of Mathematics and Computer Science,\\
  Theoretical Computer Science Department, \\
  ul. Prof. {\L}ojasiewicza 6, 30-348 Krak\'ow, Poland\\\\
  $^\ddagger$University of Lyon, \\
  \'Ecole normale sup\'erieure de Lyon, \\
  LIP (UMR 5668 CNRS ENS Lyon UCBL INRIA)\\
  46 all\'ee d'Italie, 69364 Lyon, France\\
  \email{grygiel@tcs.uj.edu.pl,pierre.lescanne@ens-lyon.fr}}
\thanks{The first author was supported by the National Science Center of Poland,
  grant number 2011/01/B/HS1/00944, when the author hold a post-doc position at the
  Jagiellonian University within the SET project co-financed by the European Union.}
\title[Counting Binary Lambda Terms]{Counting Terms in the Binary Lambda Calculus
  (extended abstract)} 
\author[K. Grygiel,P. Lescanne]{Katarzyna Grygiel$^\ddagger$\thanks{Supported by the National Science Center of Poland,
  grant number 2011/01/B/HS1/00944, when the author hold a post-doc position at the
  Jagiellonian University within the SET project co-financed by the European Union.},Pierre Lescanne$^{\ddagger,\star}$}
  \address{$^\ddagger$Jagiellonian University,\\
  Faculty of Mathematics and Computer Science,\\
  Theoretical Computer Science Department, \\
  ul. Prof. {\L}ojasiewicza 6, 30-348 Krak\'ow, Poland\\\\
  $^\star$University of Lyon, \\
  \'Ecole normale sup\'erieure de Lyon, \\
  LIP (UMR 5668 CNRS ENS Lyon UCBL INRIA)\\
  46 all\'ee d'Italie, 69364 Lyon, France\\
  \email{grygiel@tcs.uj.edu.pl,pierre.lescanne@ens-lyon.fr}
}
\begin{document}

\maketitle

\begin{abstract}
  In a paper entitled \emph{Binary lambda calculus and combinatory logic}, John Tromp
  presents a simple way of encoding lambda calculus terms as binary sequences. In
  what follows, we study the numbers of binary strings of a given size that represent
  lambda terms and derive results from their generating functions, especially that
  the number of terms of size $n$ grows roughly like $1.963447954^n$.
\end{abstract}

\ifAMS
 \noindent \textbf{Keywords:} lambda calculus, combinatorics, functional
  programming, test, random generator, ranking, unranking
\else
\keywords{lambda calculus, combinatorics, functional
  programming, test, random generator, ranking, unranking}
\fi

\section{Introduction}

In recent years growing attention has been given to quantitative research in logic
and computational models. Investigated objects (e.g., propositional formulae,
tautologies, proofs, programs) can be seen as combinatorial structures, providing
therefore the inspiration for combinatorists and computer scientists. In particular,
several works have been devoted to studying properties of lambda calculus terms. On a
practical point of view, generation of random lambda terms is the core of debugging
functional programs using random tests~\cite{DBLP:conf/icfp/ClaessenH00} and the
present paper offers an answer to a open question (see introduction
of~\cite{DBLP:conf/icfp/ClaessenH00}) since we are able to generate closed typable
terms following a uniform distribution.  This work applies beyond $`l$-calculus to
any system with bound variables, like first order predicate calculus (quantifiers are
binders like $`l$) or block structures in programming languages.

First traces of the combinatorial approach to lambda calculus date back to the work
of Jue Wang~\cite{wang04:_effic_gener_random_progr_their_applic}, who initiated the
idea of enumerating $`l$-terms. In her report, Wang defined the size of a term as the
total number of abstractions, applications and occurrences of variables, which
corresponds to the number of all vertices in the tree representing the given term.

This size model, although natural from the combinatorial viewpoint, turned out to be
difficult to handle. The question that arises immediately concerns the number of
$`l$-terms of a given size. This non-trivial task has been done by Bodini, Gardy, and
Gittenberger in \cite{bodini11:_lambd_bound_unary_heigh,2013arXiv1305.0640B} and
Lescanne in \cite{DBLP:journals/tcs/Lescanne13}.

The approach applied in the latter paper has been extended
in~\cite{grygiel_lescanne_jfp} by the authors of the current paper to the model in
which applications and abstractions are the only ones that contribute to the size of
a $`l$-term. The same model has been studied in
\cite{DBLP:journals/corr/abs-0903-5505} by David et al., where several properties
satisfied by random $`l$-terms are provided.

When dealing with the two described models, it is not difficult to define recurrence
relations for the number of $`l$-terms of a given size. However, by applying standard
tools of the theory of generating functions one obtains generating functions that are
expressed as infinite sequences of radicals. Moreover, the radii of convergence are
in both cases equal to zero, which makes the analysis of those functions very
difficult to cope with.

In this paper, we study the binary encoding of lambda calculus introduced by John
Tromp in \cite{DBLP:conf/dagstuhl/Tromp06}. This representation results in another
size model. It comes from the binary lambda calculus he defined in which he builds a
minimal self interpreter of lambda calculus\footnote{an alternative to universal
  Turing machine} as a basis of algorithmic complexity theory~\cite{LiVitanyi}.  Set
as a central question of theoretical computer science and mathematics, this approach
is also more realistic for functional programming.  Indeed for compiler builders it
is counter-intuitive to assign the same size to all the variables, because in the
translation of a program written in \textsf{Haskell}, \textsf{Ocaml} or \textsf{LISP}
variables are put in a stack.  A variable deep in the stack is not as easily
reachable as a variable shallow in the stack.  Therefore the weight of the former
should be larger than the weight of the latter. Hence it makes sense to associate a
size with a variable proportional to its distance to its binder.  In this model,
recurrence relations for the number of terms of a given size are built using this
specific notion of size.  From that, we derive corresponding generating functions
defined as infinitely nested radicals. However, this time the radius of convergence
is positive and allows us for further analysis of the functions. We are able to
compute the exact asymptotics for the number of all (not necessarily closed) terms
and we also prove the approximate asymptotics for the number of closed
ones. Moreover, we define an unranking function, \ie a generator of terms from their
indices from which we derive a uniform generator of $`l$-terms (general and typable).
This allows us to provide outcomes of computer experiments in which we estimate the
number of simply typable $`l$-terms of a given size.

\section{Lambda calculus and its binary representation}

Lambda calculus is a model of computation that is equivalent to Turing machines or
recursive functions, serving as a powerful tool in the development of the programming
theory~\cite{Mitchell96}. Furthermore, it constitutes the basis for functional
programming languages and has many applications in automated theorem provers.

Basic objects of the lambda calculus are $`l$-terms, which are regarded as denotation
for functions or computer programs. Given a countable infinite set of variables $V$,
we define lambda terms by the following grammar:
\[ M := V \; | \; `l V.M \; | \; (MM) .\]

A term of the form $`l x.M$ is called an abstraction. Each occurrence of $x$ in $M$
is called bound. We say that a variable $x$ is free in a term $N$ if it is not bound
by an enclosing abstraction. A term with no free variable is called closed. Two terms
are considered equivalent if they are identical up to renaming of bound variables.

In order to eliminate names of variables from the notation of a $`l$-term, de Bruijn
introduced an alternative way of representing equivalent terms. Instead of variables
we are given now a set of de Bruijn indices $\{ \Var{1}, \Var{2}, \Var{3}, \ldots
\}$. Given a closed $`l$-term, we form the corresponding de Bruijn term as follows:
an abstraction $`l x.M$ is now written as $`l \Var{M}$, where $\Var{M}$ is the result
of substituting each occurrence of $x$ by the index $\Var{n}$, where $n$ is the
number of $`l$'s enclosing the given occurrence of $x$; an application $MN$ is simply
replaced by $\Var{M}\Var{N}$.

Following John Tromp, we define the binary representation of de Bruijn indices in the
following way:
\begin{eqnarray*}
  \widehat{\lambda \Var{M}} &=& 00\widehat{\Var{M}},\\
  \widehat{\Var{M}\ \Var{N}} &=& 01\widehat{\Var{M}}\widehat{\Var{N}},\\
  \widehat{\Var{i}} &=& 1^i0.
\end{eqnarray*}
However, notice that unlike Tromp~\cite{DBLP:conf/dagstuhl/Tromp06} and
Lescanne~\cite{LescannePOPL94}, we start the de Bruijn indices at $1$ like de
Bruijn~\cite{deBruijn72}.  Given a $`l$-term, we define its size as the length of the
corresponding binary sequence, \ie
\begin{eqnarray*}
  |\Var{n}| &=& n +1,\\
  |`l M | &=& |M| + 2,\\
  |M\,N| &=& |M| + |N|+2.
\end{eqnarray*}

In contrast to previously studied models, the number of all (not necessarily closed)
$`l$-terms of a given size is always finite. This is due to the fact that the size of
each variable depends on the distance from its binder.

\section{Combinatorial facts}
In order to determine the asymptotics of the number of all/closed $`l$-terms of a
given size, we will use the following combinatorial notions and results.

We say that a sequence $(F_n)_{n \geq 0}$ is of
\begin{itemize}
\item order $G_n$, for some sequence $(G_n)_{n \geq 0}$ (with $G_n\neq 0$), if
  \[ \lim_{n \to \infty} F_n/G_n = 1, \] and we denote this fact by $F_n \sim G_n$;
\item exponential order $A^n$, for some constant $A$, if
  \[ \limsup_{n \to \infty} |F_n|^{1/n} = A, \] and we denote this fact by $F_n
  \bowtie A^n$.
\end{itemize}

Given the generating function $F(z)$ for the sequence $(F_n)_{n \geq 0}$, we write
$[z^n]F(z)$ to denote the $n$-th coefficient of the Taylor expansion of $F(z)$,
therefore $[z^n] F(z) = F_n$.

The theorems below (Theorem IV.7 and Theorem VI.1 of \cite{flajolet08:_analy_combin})
serve as powerful tools that allow to estimate coefficients of certain functions that
frequently appear in combinatorial considerations.

\begin{fact}
  If $F(z)$ is analytic at $0$ and $R$ is the modulus of a singularity nearest to the
  origin, then
  \[ [z^n]F(z) \bowtie (1/R)^n.\]
\end{fact}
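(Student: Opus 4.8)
The plan is to reduce the statement to two classical facts from complex analysis: the Cauchy--Hadamard formula and the characterization of the radius of convergence of a power series as the distance from the origin to the nearest singularity. Write $F(z) = \sum_{n \geq 0} F_n z^n$, and let $\rho$ denote the radius of convergence of this series.

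First I would invoke the Cauchy--Hadamard theorem, which gives
\[ \frac{1}{\rho} = \limsup_{n \to \infty} |F_n|^{1/n}. \]
Comparing this with the definition of exponential order given above, this equation says exactly that $F_n \bowtie (1/\rho)^n$. Hence the whole statement comes down to proving $\rho = R$.

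Next I would establish $\rho = R$ via two inequalities. For $\rho \leq R$: a power series defines an analytic function on its open disk of convergence, so if $\rho > R$ the function $F$ would be analytic on a disk strictly containing the nearest singularity, which is absurd (when $R = \infty$, i.e. $F$ is entire, this direction is vacuous and one gets $\rho = \infty$ as well). For $\rho \geq R$: by hypothesis $F$ is analytic on the open disk $|z| < R$, and a holomorphic function on a disk is the sum of its Taylor series throughout that disk (Taylor's theorem), so $\sum_{n} F_n z^n$ converges whenever $|z| < R$, forcing $\rho \geq R$. Combining the two inequalities yields $\rho = R$, and substituting into the Cauchy--Hadamard identity gives $\limsup_{n} |F_n|^{1/n} = 1/R$, that is, $F_n \bowtie (1/R)^n$.

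The one genuinely delicate point is the inequality $\rho \leq R$: it rests on the fact that $F$ must have at least one singularity on its circle of convergence $|z| = \rho$. Indeed, if $F$ extended analytically to a neighbourhood of every point of that circle, then by compactness of the circle it would extend to a slightly larger concentric disk, contradicting the maximality of $\rho$. For the generating functions occurring in this paper an easier route is available, since they are real with nonnegative Taylor coefficients: Pringsheim's theorem then places a singularity of $F$ precisely at the point $z = \rho$ on the positive real axis, so the nearest singularity is at modulus $\rho$ and the identification $\rho = R$ is immediate. Either argument completes the proof.
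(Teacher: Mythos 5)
Your proof is correct. Note that the paper itself offers no proof of this statement: it is quoted verbatim as Theorem~IV.7 of Flajolet and Sedgewick's \emph{Analytic Combinatorics} (the ``Exponential Growth Formula''), so there is no in-paper argument to compare against. Your reduction to Cauchy--Hadamard together with the identification of the radius of convergence with the distance to the nearest singularity is exactly the standard proof given in that reference, and you correctly isolate the only delicate step, namely that a power series must have a singularity on its circle of convergence. The only point worth making fully explicit there is that when you glue the local analytic extensions around the boundary circle into a function on a larger disk, you need the identity theorem to check that two overlapping local extensions agree (each coincides with $F$ on a nonempty open subset of the disk contained in their common domain); this is routine. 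Your observation that Pringsheim's theorem shortcuts the argument for the nonnegative-coefficient series actually used in the paper is apt, and is indeed how the paper implicitly locates the dominant singularity of $\Vv(z)$ and $\Ss_m(z)$ on the positive real axis.
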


\begin{fact}\label{fact:asym_exp}
  Let $\alpha$ be an arbitrary complex number in $\mathbb{C}\setminus
  \mathbb{Z}_{\leq 0}$. The coefficient of $z^n$ in
  \[ f(z) = (1-z)^{\alpha} \] admits the following asymptotic expansion:
  \begin{eqnarray*} [z^n]f(z) &\sim& \frac{n^{\alpha - 1}}{\Gamma (\alpha)} \left( 1
      + \frac{\alpha (\alpha - 1)}{2n} + \frac{`a(`a-1)(`a-2)(3`a-1)}{24n^2} + O
      \left( \frac{1}{n^{3}} \right) \right),
  \end{eqnarray*}
  where $\Gamma$ is the Euler Gamma function.
\end{fact}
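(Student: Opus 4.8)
The plan is to reduce the statement to the classical asymptotics of a ratio of Gamma functions; throughout I would work with the function $(1-z)^{-\alpha}$, the form in which this result is usually stated (it is Theorem~VI.1 of \cite{flajolet08:_analy_combin}) and the one relevant to the singularities that arise later, since its coefficients do not vanish identically.

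First I would write the coefficient exactly via the generalized binomial theorem:
\[
[z^n](1-z)^{-\alpha} \;=\; \binom{n+\alpha-1}{n} \;=\; \frac{\alpha(\alpha+1)\cdots(\alpha+n-1)}{n!} \;=\; \frac{\Gamma(n+\alpha)}{\Gamma(\alpha)\,\Gamma(n+1)}.
\]
The last equality is exactly where the hypothesis $\alpha \notin \mathbb{Z}_{\leq 0}$ is used: it guarantees that $\Gamma(\alpha)$ is finite and nonzero. If $\alpha$ were a non-positive integer the product would telescope, the function would be a polynomial, and the expansion below would be meaningless --- hence the exclusion. Everything then comes down to estimating $\Gamma(n+\alpha)/\Gamma(n+1)$ as the integer $n\to\infty$.

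Second, I would feed this ratio into Stirling's expansion. Using $\log\Gamma(x) = (x-\tfrac12)\log x - x + \tfrac12\log(2\pi) + \frac{1}{12x} - \frac{1}{360x^3} + \cdots$, one forms $\log\Gamma(n+\alpha) - \log\Gamma(n+1)$, expands in descending powers of $n$, and exponentiates to obtain
\[
\frac{\Gamma(n+\alpha)}{\Gamma(n+1)} \;=\; n^{\alpha-1}\left(1 + \frac{\alpha(\alpha-1)}{2n} + \frac{\alpha(\alpha-1)(\alpha-2)(3\alpha-1)}{24\,n^2} + O\!\left(\frac{1}{n^{3}}\right)\right),
\]
uniformly for $\alpha$ in any compact subset of $\mathbb{C}\setminus\mathbb{Z}_{\leq 0}$; dividing by $\Gamma(\alpha)$ yields the claimed expansion. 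A coefficient-free alternative I would keep in reserve is singularity analysis in the style of Flajolet--Odlyzko: from Cauchy's formula $[z^n]f = \frac{1}{2\pi i}\oint f(z)\,z^{-n-1}\,dz$, deform the contour onto a Hankel contour encircling the unique singularity $z=1$, substitute $z = 1 + t/n$, and expand $z^{-n-1} = \exp\big(-(n+1)\log(1+t/n)\big)$; the leading term reproduces Hankel's integral for $1/\Gamma(\alpha)$, and the next two orders of the exponential's expansion generate the $1/n$ and $1/n^2$ corrections.

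The main difficulty here is purely bookkeeping: one must carry the Stirling series (equivalently, the $z = 1 + t/n$ substitution) to precisely the right order, collect terms, and check that the coefficient of $1/n$ is $\alpha(\alpha-1)/2$ and that of $1/n^2$ is $\alpha(\alpha-1)(\alpha-2)(3\alpha-1)/24$, while confirming that the remainder is $O(n^{-3})$ uniformly on compacta. No idea beyond these standard expansions is required, so in the paper we simply invoke \cite[Thm.~VI.1]{flajolet08:_analy_combin}.
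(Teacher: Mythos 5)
Your proposal is correct: the paper offers no proof of this Fact, importing it verbatim as Theorem~VI.1 of \cite{flajolet08:_analy_combin}, and your Gamma-ratio-plus-Stirling argument (with the Hankel-contour alternative) is precisely the standard proof of that theorem, so there is nothing to compare against. You are also right to work with $(1-z)^{-\alpha}$: as printed, the Fact's $f(z)=(1-z)^{\alpha}$ is a sign typo, since only for the exponent $-\alpha$ does $[z^n]f(z)=\binom{n+\alpha-1}{n}=\Gamma(n+\alpha)/(\Gamma(\alpha)\Gamma(n+1))$ match the stated right-hand side $n^{\alpha-1}/\Gamma(\alpha)\cdot(1+\cdots)$.
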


\section{The sequences $S_{m,n}$}
  
Let us denote the number of $`l$-terms of size $n$ with at most $m$ distinct free
indices by $S_{m,n}$.

First, let us notice that there are no terms of size $0$ and $1$. Let us consider a
$`l$-term of size $n+2$ with at most $m$ distinct free variables. Then we have one of
the following cases.
\begin{itemize}
\item The term is a de Bruijn index $\Var{n+1}$, provided $m$ is greater than or
  equal to $n+1$.
\item The term is an abstraction whose binary representation is given by
  $00\widehat{M}$, where the size of $M$ is $n$ and $M$ has at most $m+1$ distinct
  free variables.
\item The term is an application whose binary representation is given by
  $01\widehat{M}\widehat{N}$, where $M$ is of size $i$ and $N$ is of size $n-i$, with
  $i \in \{ 0,\ldots,n \}$, and both terms have at most $m$ distinct free variables.
\end{itemize}

This leads to the following recursive formula\footnote{Given a predicate $P$,
  $[P(\vec{x})]$ denotes the Iverson symbol, i.e., $[P(\vec{x})] = 1$ if $P(\vec{x})$
  and $[P(\vec{x})] = 0$ if $\neg P(\vec{x})$.}:
\begin{eqnarray}
  S_{m,0} &=& S_{m,1} ~=~ 0,\label{eq:Smn}\\
  S_{m,n+2} &=& [m \ge n+1] + S_{m+1,n} + \sum_{k=0}^n S_{m,k} S_{m,n-k}.\label{eq:Smn2}
\end{eqnarray}
The sequence $S_{0,n}$, \ie the sequence of numbers of closed $`l$-terms of size $n$,
can be found in the \emph{On-line Encyclopedia of Integer Sequences} under the number
\textbf{A114852}. Its first $20$ values are as follows:
\[ 0,\ 0,\ 0,\ 0,\ 1,\ 0,\ 1,\ 1,\ 2,\ 1,\ 6,\ 5,\ 13,\ 14,\ 37,\ 44,\ 101,\ 134,\
298,\ 431.\]

Now let us define the family of generating functions for sequences $(S_{m,n})_{n \geq
  0}$:
\begin{eqnarray*}
  \Ss_m(z) &=& \sum_{n=0}^{\infty} S_{m,n}\, z^n.
\end{eqnarray*}

Most of all, we are interested in the generating function for the number of closed
terms, \ie
\begin{eqnarray*}
  \Ss_0(z) &=& \sum_{n=0}^{\infty} S_{0,n}\, z^n.
\end{eqnarray*}

Applying the recurrence on $S_{m,n}$, we get
\begin{eqnarray*}
  \Ss_m(z) &=& z^2 \sum_{n= 0}^\infty S_{m,n+2} z^n\\
  &=& z^2 \sum_{n= 0}^\infty [m \ge n+1]z^n + z^2 \sum_{n= 0}^\infty S_{m+1,n}\,z^n + z^2 \sum_{n= 0}^\infty\sum_{k=0}^n
  S_{m,k} S_{m,n-k}\,z^n\\
  &=&  z^2 \sum_{k=0}^{m-1} z^k + z^2 \Ss_{m+1}(z) + z^2 \Ss_m(z)^2\\
  &=& \frac{z^2\,(1-z^m)}{1-z}  + z^2 \Ss_{m+1}(z) + z^2 \Ss_m(z)^2.
\end{eqnarray*}

Solving the equation
\begin{eqnarray}\label{eq:Szu}
  z^2 \Ss_m(z)^2 -  \Ss_m(z) +\frac{z^2\,(1-z^m)}{1-z}  + z^2 \Ss_{m+1}(z) = 0
\end{eqnarray}
gives us
\begin{eqnarray}\label{eq:1}
  \Ss_m(z) \ =\ \frac{1 - \sqrt{1 - 4z^4 \left(\frac{1-z^m}{1-z}  +  \Ss_{m+1}(z)\right)}}{2 z^2}.
\end{eqnarray}

This means that the generating function $\Ss_m(z)$ is expressed by means of
infinitely many nested radicals, a phenomenon which has already been encountered in
previous research papers on enumeration of lambda terms, see e.g.,
\cite{bodini11:_lambd_bound_unary_heigh}. However, in Tromp's binary lambda calculus
we are able to provide more results than in other representations of lambda terms.

First of all, let us notice that the number of lambda terms of size $n$ has to be
less than $2^n$, the number of all binary sequences of size $n$. This means that in
the considered model of lambda terms the radius of convergence of the generating
function enumerating closed lambda terms is positive (even larger that $1/2$), which
is not the case in other models, where the radius of convergence is equal to zero.

\section{The number of all $`l$-terms}

Let us now consider the sequence enumerating all binary $`l$-terms, \ie including
terms that are not closed. Let $S_{\infty,n}$ denote the number of all such terms of
size $n$. Repeating the reasoning from the previous section, we obtain the following
recurrence relation:

\begin{eqnarray*}
  S_{\infty,0} &=& S_{\infty,1} ~=~ 0,\\
  S_{\infty,n+2} &=& 1 + S_{\infty,n} + \sum_{k=0}^n S_{\infty,k} S_{\infty,n-k}.
\end{eqnarray*}

The sequence $(S_{\infty,n})_{n`:\nat}$ can be found in \emph{On-line Encyclopedia of
  Integer Sequences} with the entry number \textbf{A114851}. Its first $20$ values
are as follows:
\[ 0,\ 0,\ 1,\ 1,\ 2,\ 2,\ 4,\ 5,\ 10,\ 14,\ 27,\ 41,\ 78,\ 126,\ 237,\ 399,\ 745,\
1292,\ 2404,\ 4259.\]
 
Obviously, we have $S_{m,n} \le S_{\infty,n}$ for every $m,n \in \nat$. Moreover,
$\displaystyle \lim_{m\to \infty} S_{m,n} = S_{\infty,n}$.

Let $\Vv(z)$ denote the generating function for the sequence
$(S_{\infty,n})_{n`:\nat}$, that is
\[\Vv(z) = \sum_{n=0}^{\infty}S_{\infty,n} z^n.\]
Notice that for $m \geq n-1$ we have $S_{m,n}=S_{\infty,n}$. Therefore
\[\Vv(z) = \sum_{n=1}^{\infty}S_{n,n}z^{n},\]
which yields that $[z^n]\Ss_{n,n}=[z^n]\Ss_{\infty,n}$. Furthermore, $\displaystyle
\Vv(z) = \lim_{m \to \infty} \Ss_m(z)$.

\begin{theo}
  The number of all binary $`l$-terms of size $n$ satisfies
  \[ S_{\infty,n} \sim (1/\rho)^n \cdot \frac{C}{n^{3/2}},\] where $\rho \doteq
  0.509308127$ and $C \doteq 1.021874073$.
\end{theo}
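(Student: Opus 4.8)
The plan is to collapse the infinite family $(\Ss_m)_m$ into a single algebraic equation for $\Vv=\lim_{m\to\infty}\Ss_m$, solve it explicitly, and then read off the coefficient asymptotics by singularity analysis. The only delicate point will be to show that the resulting algebraic function has a single dominant singularity; this is precisely what upgrades the bare exponential order $S_{\infty,n}\bowtie(1/\rho)^n$ of Fact~1 to the genuine equivalence $\sim$.

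\emph{A closed equation for $\Vv$.} Since $0\le S_{m,n}\le S_{\infty,n}$ and $S_{m,n}\to S_{\infty,n}$ as $m\to\infty$, the series $\Ss_m(z)$ converge to $\Vv(z)$ coefficientwise, and uniformly on a disc about $0$ (they are dominated by $\Vv$, whose radius of convergence exceeds $1/2$); there $z^m\to 0$ as well. Passing to the limit in~\eqref{eq:Szu} thus gives $z^2\Vv(z)^2+(z^2-1)\Vv(z)+\frac{z^2}{1-z}=0$, and the branch with $\Vv(0)=0$ (the only one with nonnegative coefficients) is
\[
\Vv(z)=\frac{(1-z^2)-\sqrt{\Delta(z)}}{2z^2},\qquad
\Delta(z)=(1-z^2)^2-\frac{4z^4}{1-z}=\frac{P(z)}{1-z},
\]
where $P(z)=(1-z)^3(1+z)^2-4z^4=1-z-2z^2+2z^3-3z^4-z^5$ and the square root is normalized to $1$ at the origin.

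\emph{The dominant singularity.} The singularities of the algebraic function $\Vv$ can occur only at $z=0$ (removable: the numerator has a zero of order $4$ there), at $z=1$, and at zeros of $P$. I would take $\rho$ to be the least positive zero of $P$; since $P(0)=1>0$ and $P(1)=-4<0$ we have $\rho\in(0,1)$, numerically $\rho\doteq0.509308127$, and $\rho$ is simple ($P>0$ on $[0,\rho)$ forces $P'(\rho)\le 0$, and one checks $P'(\rho)\ne0$). The key claim is that $\rho$ is the \emph{unique} zero of $P$ in the closed disc $|z|\le\rho$. I would prove it by an elementary estimate: for $z=re^{i\theta}$, putting $u=2r\cos\theta\in[-2r,2r]$ and $s=1+r^2$, one has $\bigl|(1-z)^3(1+z)^2\bigr|=(s-u)^{3/2}(s+u)$, a function of $u$ whose only interior critical point is a maximum, so its minimum over the circle $|z|=r$ is attained at $u=2r$, i.e. at $z=r$, with value $(1-r)^3(1+r)^2$. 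Since $(1-t)^3(1+t)^2-4t^4=P(t)>0$ for $t\in[0,\rho)$ and $=0$ at $t=\rho$, it follows that for $|z|=r\le\rho$
\[
\bigl|(1-z)^3(1+z)^2\bigr|\ \ge\ (1-r)^3(1+r)^2\ \ge\ 4r^4\ =\ |4z^4|,
\]
with equality everywhere only when $r=\rho$ and $z=\rho$; hence $P$ (and so $\Delta$) has no zero on $\{|z|\le\rho\}$ other than $\rho$. This is the one step that needs real care, but as just sketched it reduces to a one‑variable optimization, so I do not expect it to be a serious obstruction.

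\emph{Transfer.} Consequently $\Delta(z)=(1-z/\rho)\,g(z)$ with $g(z)=\rho\,(P(z)/(\rho-z))/(1-z)$ rational, analytic and non-vanishing on a disc of radius slightly larger than $\rho$, and $g(\rho)=-\rho\,\Delta'(\rho)=\frac{-\rho\,P'(\rho)}{1-\rho}>0$; so $\sqrt{g}$ is analytic there (the branch $=1$ at $0$) and
\[
\Vv(z)=\frac{1-z^2}{2z^2}-\frac{\sqrt{g(z)}}{2z^2}\,\sqrt{1-z/\rho}.
\]
The first summand is rational with its only pole at $0$, hence contributes nothing to $[z^n]\Vv(z)$ for $n\ge1$; the coefficient of $\sqrt{1-z/\rho}$ is analytic and nonzero at $\rho$. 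Thus $\Vv$ has a square-root singularity at $\rho$, no other singularity on $|z|=\rho$, and is analytic in a slit neighbourhood of $\rho$ — exactly the hypotheses for singularity analysis. Feeding the local expansion into the transfer theorem and Fact~2 (so that $[z^n]\sqrt{1-z/\rho}\sim -\rho^{-n}n^{-3/2}/(2\sqrt{\pi})$, the analytic part being exponentially negligible) yields
\[
S_{\infty,n}=[z^n]\Vv(z)\ \sim\ \frac{\sqrt{g(\rho)}}{4\sqrt{\pi}\,\rho^{2}}\cdot\frac{(1/\rho)^{n}}{n^{3/2}}
\ =\ \frac{1}{4\rho^{2}}\sqrt{\frac{-\rho\,P'(\rho)}{\pi(1-\rho)}}\cdot\frac{(1/\rho)^{n}}{n^{3/2}}\,,
\]
and substituting the numerical values of $\rho$ and $P'(\rho)$ into the constant gives $C\doteq1.021874073$, completing the proof.
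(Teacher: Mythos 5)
Your proposal is correct and follows essentially the same route as the paper: pass to the limit in \eqref{eq:Szu} to get the quadratic $z^2\Vv^2+(z^2-1)\Vv+\tfrac{z^2}{1-z}=0$, take the branch vanishing at $0$, locate the square-root singularity at the smallest positive zero $\rho$ of the discriminant's numerator (your $P$ is exactly $R_\infty(z)/(1-z)$, so $\Delta=R_\infty/(1-z)^2$ and your closed form for $\Vv$ agrees with the paper's), factor out $1-z/\rho$, and apply Fact~\ref{fact:asym_exp}; your constant $\tfrac{1}{4\rho^2}\sqrt{-\rho P'(\rho)/(\pi(1-\rho))}$ coincides with the paper's $\widetilde{C}/\Gamma(-\tfrac12)$ and evaluates to $1.021874$. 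The one place you genuinely diverge is the verification that $\rho$ is the unique singularity on $|z|\le\rho$: the paper simply lists all six roots of $R_\infty$ numerically and observes the complex pair has modulus about $0.89$, whereas you prove it analytically by minimizing $|(1-z)^3(1+z)^2|$ over circles $|z|=r$ (the reduction to the endpoint $u=2r$ checks out, since $(s-u)^{3/2}(s+u)$ has a single interior maximum and $(1-r)^3(1+r)^2<(1+r)^3(1-r)^2$). That argument is more work but buys a computation-free, fully rigorous isolation of the dominant singularity; the paper's numerical enumeration of roots is shorter and, for a degree-six polynomial, equally convincing in practice.
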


\ifAMS
\begin{proof}
  The generating function $\Vv(z)$ fulfills the equation
  \[\Vv(z) = \frac{z^2}{1-z} + z^2 \Vv(z) + z^2 \Vv(z).\]

  Solving the above equation gives us
  \[\Vv(z) = \frac{ z^3 - z^2 - z + 1 - \sqrt{z^6 + 2\,z^5 - 5\,z^4 + 4\,z^3 - z^2 -
      2\,z + 1}}{2z^2(1 - z)}.\]

  The dominant singularity of the function $\Vv(z)$ is given by the root of smallest
  modulus of the polynomial
  \[R_\infty(z) = z^6 + 2\,z^5 - 5\,z^4 + 4\,z^3 - z^2 - 2\,z + 1.\]

  The polynomial has four real roots: \[0.509308127, \quad -0.623845142, \quad 1,
  \quad -3.668100004,\] and two complex ones that are approximately equal to $0.4 +
  0.8i$ and $0.4 - 0.8i$.


  Therefore $\rho \doteq 0.509308127$ is the singularity of $\Vv$ nearest to the
  origin.  Let us write $\Vv(z)$ in the following form:
  \[\Vv(z) = \frac{ 1-z^2 - \sqrt{\rho (1-\frac{z}{\rho}) \cdot
      \frac{Q(z)}{1-z}}}{2z^2},\] where $Q(z)=\frac{R_\infty(z)}{(\rho-z)(1-z)}$ is
  the polynomial defined for all $|z| \le \rho$.

  We get that the radius of convergence of $\Vv(z)$ is equal to $\rho$ and its
  inverse $\frac{1}{`r} \doteq 1.963447954$ gives the growth of
  $S_{\infty,n}$. Hence, $S_{\infty,n} \bowtie (1/\rho)^n$.
 
  Fact \ref{fact:asym_exp} allows us to determine the subexponential factor of the
  asymptotic estimation of the number of terms. Applying it, we obtain that
  \[ [z^n]\Vv(z) \sim \left( \frac{1}{\rho}\right)^n \cdot \frac{n^{-3/2}}{\Gamma
    (-\frac{1}{2})} \cdot \widetilde{C}, \] where the constant $\widetilde{C}$ is
  given by
  \[\widetilde{C} = \frac{- \sqrt{\rho \cdot \frac{Q(\rho)}{1-\rho}}}{2 \rho ^{2}}
  \doteq -0.288265354.\] Since $\displaystyle \frac{\widetilde{C}}{\Gamma
    (-\frac{1}{2})} \doteq 1.021874073$, the theorem is proved.
\end{proof}
\else
The proof is in the full version of the paper~\cite{GrygielLescanne-Binary}. Notice
that $1/`r \doteq 1.963447954$.
\fi
\section{The number of closed $`l$-terms}

\begin{prop}
  Let $\rho_m$ denote the dominant singularity of $\Ss_m(z)$. Then for every natural
  number $m$ we have
  \[\rho_m = \rho_0,\]
  which means that all functions $\Ss_m(z)$ have the same dominant singularity.
\end{prop}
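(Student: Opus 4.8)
The plan is to prove the stronger fact that consecutive radii coincide, $\rho_m=\rho_{m+1}$ for every $m\in\nat$; the proposition then follows by an immediate induction on $m$ (the case $m=0$ being trivial).

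First I would record the easy inequality $\rho_m\ge\rho_{m+1}$. A $`l$-term with at most $m$ distinct free indices has, in particular, at most $m+1$ of them, so $S_{m,n}\le S_{m+1,n}$ for all $n$; the Cauchy--Hadamard formula $\rho_m=\bigl(\limsup_n S_{m,n}^{1/n}\bigr)^{-1}$ then gives $\rho_m\ge\rho_{m+1}$ at once.

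The substance is the reverse inequality, and the idea is to read the quadratic~\eqref{eq:Szu} ``backwards''. By the previous paragraph the disc $|z|<\rho_{m+1}$ lies inside the discs of convergence of both $\Ss_m$ and $\Ss_{m+1}$, and on it~\eqref{eq:Szu} holds as an identity of analytic functions; solving for $\Ss_{m+1}$ gives
\[ \Ss_{m+1}(z)\ =\ \frac{\Ss_m(z)}{z^2}-\Ss_m(z)^2-\frac{1-z^m}{1-z}. \]
In contrast to~\eqref{eq:1}, the right-hand side is \emph{rational} in $\Ss_m$, hence analytic wherever $\Ss_m$ is --- provided it has no pole at $z=0$ or at $z=1$. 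But $\frac{1-z^m}{1-z}=1+z+\dots+z^{m-1}$ is a polynomial, and since $S_{m,0}=S_{m,1}=0$ the function $\Ss_m(z)$ vanishes to order at least two at the origin, so $\Ss_m(z)/z^2$ is analytic there as well. Hence the right-hand side is analytic on the whole disc $|z|<\rho_m$. If $\rho_m>\rho_{m+1}$ were true, this would exhibit an analytic continuation of $\Ss_{m+1}$ to a disc strictly larger than its disc of convergence, which is absurd. Therefore $\rho_m\le\rho_{m+1}$, and combined with the previous step $\rho_m=\rho_{m+1}$.

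I do not expect a genuine obstacle; the only point needing care is exactly the clause ``provided it has no pole at $z=0$ or at $z=1$'', that is, checking that inverting~\eqref{eq:Szu} introduces no spurious singularity. The conceptual moral is the asymmetry between the two directions of~\eqref{eq:Szu}: passing from $\Ss_{m+1}$ to $\Ss_m$ drags in a square root, under which a branch point could in principle be masked, whereas passing from $\Ss_m$ to $\Ss_{m+1}$ is purely rational and makes the comparison of radii of convergence transparent.
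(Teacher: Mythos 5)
Your proof is correct, and its skeleton is the paper's: the inequality $\rho_m\ge\rho_{m+1}$ comes from the coefficientwise bound $S_{m,n}\le S_{m+1,n}$ exactly as in the paper, and the reverse inequality comes from the functional equation linking $\Ss_m$ and $\Ss_{m+1}$. The one real difference is how that second inequality is extracted. The paper reads the radical formula \eqref{eq:1} and asserts that every singularity of $\Ss_{m+1}$ is also a singularity of $\Ss_m$ --- a claim that, strictly speaking, requires the observation that a square root cannot erase a singularity of its argument (if $\sqrt{g}$ were analytic at $z_0$, so would be $g=(\sqrt{g})^2$). You instead solve \eqref{eq:Szu} for $\Ss_{m+1}$, obtaining a polynomial expression in $\Ss_m$ with coefficients analytic on $|z|<\rho_m$ (the term $\frac{1-z^m}{1-z}$ is a polynomial, and $\Ss_m(z)/z^2$ is harmless since $\Ss_m$ vanishes to order $2$ at the origin), so that analyticity of $\Ss_m$ on $|z|<\rho_m$ immediately forces the Taylor series of $\Ss_{m+1}$ to converge there. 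That is the contrapositive of the paper's step, and reading the quadratic in the rational direction makes the singularity transfer airtight without any discussion of branch points; it is a slightly cleaner rendering of the same idea rather than a genuinely different argument.
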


\begin{proof}
  First, let us notice that for every $m,n \in \nat$ we have $S_{m,n} \le
  S_{m+1,n}$. This means that the radius of convergence of the generating function
  for the sequence $(S_{m,n})_{n \in \nat}$ is not smaller that the radius of
  convergence of the generating function for $(S_{m+1,n})_{n \in \nat}$. Therefore,
  for every natural number $m$, we have
  \[ \rho_m \ge \rho_{m+1} .\]

  On the other hand, from Equation~\ref{eq:1} we see that every singularity of
  $\Ss_{m+1}(z)$ is also a singularity of $\Ss_m(z)$. Hence, the dominant singularity
  of $\Ss_{m}(z)$ is less than or equal to the dominant singularity of
  $\Ss_{m+1}(z)$, \ie we have
  \[ \rho_m \le \rho_{m+1} .\]

  These two inequalities show that dominant singularities of all functions $\Ss_m(z)$
  are the same. In particular, for every $m$ we have $\rho_m = \rho_0$.
\end{proof}

\begin{prop}
  The dominant singularity of $\Ss_0(z)$ is equal to the dominant singularity of
  $\Vv(z)$, \ie
  \[\rho_0 \ = \ \rho \ \doteq \ 0.509308127. \]
\end{prop}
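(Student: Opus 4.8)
The plan is to pin down $\rho_0$ by squeezing it between $\rho$ from above and from below, using the previous proposition (all $\Ss_m$ share the singularity $\rho_0$) together with the comparison $S_{m,n}\le S_{\infty,n}$ established in the section on all terms. First I would observe that since $S_{0,n}\le S_{\infty,n}$ for every $n$, the radius of convergence of $\Ss_0$ is at least that of $\Vv$, so $\rho_0\ge\rho$. The work lies in the reverse inequality $\rho_0\le\rho$.

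For that, I would argue as follows. Recall $\Vv(z)=\lim_{m\to\infty}\Ss_m(z)$ coefficientwise, and more precisely that $[z^n]\Vv(z)=S_{\infty,n}=S_{n,n}=[z^n]\Ss_{n,n}$. Since by the first proposition every $\Ss_m$ has dominant singularity exactly $\rho_0$, each $\Ss_m$ converges on the open disk $|z|<\rho_0$; in particular the partial information $S_{m,n}\le C_m (\rho_0-\varepsilon)^{-n}$ holds for each fixed $m$. Suppose for contradiction $\rho_0>\rho$. Pick $r$ with $\rho<r<\rho_0$. I want to derive that $\Vv(r)$ converges, contradicting that $\rho$ is the radius of convergence of $\Vv$. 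The natural route: show directly from the recurrence $S_{\infty,n+2}=1+S_{\infty,n}+\sum_k S_{\infty,k}S_{\infty,n-k}$ that $\Vv$ satisfies the functional equation $\Vv(z)=\dfrac{z^2}{1-z}+z^2\Vv(z)+z^2\Vv(z)^2$ (the same derivation that appears in the proof of Theorem 1), and that this equation, viewed as $z^2\Vv^2-(1-z^2)\Vv+\frac{z^2}{1-z}=0$, has its relevant branch analytic precisely up to the smallest root of the discriminant polynomial $R_\infty(z)=z^6+2z^5-5z^4+4z^3-z^2-2z+1$, namely $\rho$. The cleanest way to reach a contradiction is then: $\Ss_0(z)\le\Ss_m(z)\le\Vv(z)$ termwise for all $m$, and $\Vv$ itself has radius of convergence exactly $\rho$; hence $\Ss_0$, having nonnegative coefficients bounded by those of $\Vv$, has radius of convergence $\ge\rho$ — that only re-gives $\rho_0\ge\rho$. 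So the contradiction must instead exploit the nested-radical structure: from Equation~\eqref{eq:1}, $\Ss_0$ is built from $\Ss_1$, which is built from $\Ss_2$, and so on, and each layer introduces the factor $\frac{1-z^m}{1-z}$ which is analytic at $\rho<1$; the only way the tower $\Ss_0$ can stay analytic past $\rho$ is if the limit object $\Vv$ is analytic past $\rho$, which it is not.

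Concretely, here is the step I would actually carry out. Since $\Vv=\lim_m\Ss_m$ uniformly on compact subsets of $|z|<\rho_0$ (monotone increasing limit of analytic functions with nonnegative coefficients, all analytic on that disk by the first proposition), $\Vv$ is analytic on $|z|<\rho_0$. But $\Vv$ is the explicit algebraic function $\dfrac{z^3-z^2-z+1-\sqrt{R_\infty(z)}}{2z^2(1-z)}$ from the proof of Theorem 1, whose dominant singularity is the smallest modulus root of $R_\infty$, i.e. $\rho\doteq 0.509308127$. An analytic function cannot be analytic on a disk strictly larger than its disk of convergence, so $\rho_0\le\rho$. Combined with $\rho_0\ge\rho$ this gives $\rho_0=\rho$.

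The main obstacle is justifying that $\Vv$ is analytic on the full open disk $|z|<\rho_0$ — i.e., that the coefficientwise monotone limit of the $\Ss_m$ actually converges to a finite analytic function there rather than blowing up. This needs the uniform bound $S_{m,n}\le S_{\infty,n}$ together with the already-established fact that $\sum_n S_{\infty,n} z^n=\Vv(z)$ has radius of convergence $\rho$; so in fact one only gets analyticity of the limit on $|z|<\rho$, which is exactly what makes the argument close: if $\rho_0$ were $>\rho$, then on the annulus $\rho<|z|<\rho_0$ each $\Ss_m(z)$ is finite and increasing in $m$, but their limit $\Vv(z)=\sum S_{\infty,n}z^n$ diverges there — contradiction. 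Thus the real content is the interplay between "each $\Ss_m$ is analytic up to $\rho_0$" (first proposition) and "the increasing limit $\Vv$ diverges beyond $\rho$" (Theorem 1), and the proof is just the observation that these two facts are incompatible unless $\rho_0=\rho$.
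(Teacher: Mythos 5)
Your first half ($\rho_0\ge\rho$ from $S_{0,n}\le S_{\infty,n}$) is fine and matches the paper. The reverse inequality, however, rests on a claimed contradiction that is not one: you assert that ``each $\Ss_m$ is analytic up to $\rho_0$'' and ``the increasing limit $\Vv$ diverges beyond $\rho$'' are incompatible unless $\rho_0=\rho$. They are perfectly compatible. A coefficientwise increasing sequence of power series, each with radius of convergence $\rho_0$, can converge to a series with strictly smaller radius of convergence; at a real point $r\in(\rho,\rho_0)$ each $\Ss_m(r)$ is finite and the sequence simply tends to $+\infty$ with $m$. A concrete counterexample to your ``incompatibility'': let
\[ f_m(z)\ =\ \sum_{n=0}^{m}2^n z^n+\sum_{n>m}z^n\ =\ \hbox{(polynomial)}+\frac{z^{m+1}}{1-z}. \]
Each $f_m$ is analytic on $|z|<1$, the coefficients increase in $m$, yet the limit is $1/(1-2z)$ with radius $1/2$. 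This is exactly the structure at hand, since $S_{m,n}=S_{\infty,n}$ for $n\le m+1$ and the discrepancy sits in the tail. For the same reason your appeal to uniform convergence on compacts of $|z|<\rho_0$ (and hence analyticity of $\Vv$ there) is unjustified: Dini-type arguments presuppose a finite continuous limit, which is precisely what is in question on the annulus.

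What is missing is an \emph{upper} bound on $\rho_m$ tending to $\rho$, and this is the actual content of the paper's proof. The paper introduces, for each $m$, the fixed point $\St_m$ of the functional $\Phi_m(F)=\bigl(1-\sqrt{1-4z^4(\tfrac{1-z^m}{1-z}+F)}\bigr)/(2z^2)$; monotonicity of $\Phi_m$ together with $\Ss_m\le\Ss_{m+1}$ gives $\Phi_m(\Ss_m)\le\Ss_m$, hence $\St_m\le\Ss_m$ and so $\sigma_m\ge\rho_m\ge\rho$, where $\sigma_m$ is the dominant singularity of $\St_m$. Crucially, $\St_m$ is an explicit algebraic function: $\sigma_m$ is the smallest root of $P_m(z)=4z^4(1-z^m)-(1-z)^3(1+z)^2=P_\infty(z)-4z^{m+4}$, and uniform convergence $P_m\to P_\infty$ on $[0,\zeta]$ for $\rho<\zeta<1$ forces $\sigma_m\to\rho$. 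The squeeze then gives $\rho_m\to\rho$, and the previous proposition (all $\rho_m$ equal) converts this limit statement into $\rho_0=\rho$. Your proposal never produces such a computable majorization of the singularities, so the gap is essential rather than cosmetic.
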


\begin{proof}
  Since the number of closed binary $`l$-terms is not greater than the number of all
  binary terms of the same size, we conclude immediately that $\rho_0 \geq \rho$.

  Let us now consider the functionals
  \begin{eqnarray*}
    `F_m(F) &=& \frac{1-\sqrt{1- 4z^4(\frac{1-z^m}{1-z} + F)}}{2z^2},\\
    `F_{\infty}(F) &=& \frac{1-\sqrt{1- 4z^4(\frac{1}{1-z} + F)}}{2z^2}.
  \end{eqnarray*}

  In particular, when $m=0$, we have
  \begin{eqnarray*}
    `F_0(F) &=&  \frac{1-\sqrt{1+ 4z^4F}}{2z^2}.
  \end{eqnarray*}
  We have also
  \begin{eqnarray*}
    \Ss_m(z) &=& `F_m(\Ss_{m+1}(z)).
  \end{eqnarray*}

  The $`F_m$'s and $`F_{\infty }$ are increasing over functions over $[0,1)$, which
  means that
  \begin{eqnarray*}
    F \le G &"=>"&`F_m(F) \le `F_m(G),\\
    F \le G &"=>"&`F_{\infty}(F) \le `F_{\infty}(G).
  \end{eqnarray*}

  For each $m \in \nat$, let us consider the function $\St_m(z)$ defined as the fixed
  point of $`F_m$. In other words, $\St_m(z)$ is defined as the solution of the
  following equation:
  \begin{eqnarray*}
    \St_m(z) &=& `F_m(\St_m(z)).
  \end{eqnarray*}

  Notice that since $S_{m,n} \le S_{m+1,n} \le S_{\infty,n}$ we can claim that
  $\Ss_m(z) \le\Ss_{m+1}(z) \le \Ss_\infty(z)$. Therefore, we have
  \begin{eqnarray}
    `F_m(\Ss_m(z)) &\le & \Ss_m(z), \\
    \St_m(z) & \le & \Ss_m(z) \ \le \ \Vv(z). \label{eq:tilde}
  \end{eqnarray}

  Since $\St_m(z)$ satisfies
  \begin{eqnarray*}
    2z^2\St_m(z) &=& 1 -  \sqrt{1 - 4z^4 \Big( \frac{1-z^m}{1-z} + \St_m(z) \Big)},
  \end{eqnarray*}
  we get
  \begin{eqnarray*}
    z^2\St_m^2(z) - (1-z^2)^2 \St_m(z) + \frac{z^2(1-z^m)}{1-z} = 0.
  \end{eqnarray*}

  The discriminant of this equation is:
  \begin{eqnarray*}
    `D_m &=& (1-z^2)^2 - \frac{4z^4(1-z^m)}{1-z}.
  \end{eqnarray*}

  The values for which $`D_m = 0$ are the singularities of $\St_m(z)$. Let us denote
  the main singularity of $\St_m(z)$ by $`s_m$. From Equation~\eqref{eq:tilde} we see
  that
  \begin{eqnarray*}
    `s_m\ge `r_m \ge `r. \label{eq:sigma_m_rho_m_rho}
  \end{eqnarray*}

  The value of $`s_m$ is equal to the root of smallest modulus of the following
  polynomial:
  \[ P_m(z) \ := \ (z-1)`D_m \ = \ 4z^4(1-z^m) - (1-z)^3 (1+z)^2 .\]

  In the case of the function $\St_\infty(z)$, we get the polynomial
  \begin{eqnarray*}
    P_\infty(z)&=&z^5 + 3z^4 - 2z^3 + 2z^2 + z - 1 \ =\ \frac{R_\infty(z)}{z-1},
  \end{eqnarray*}
  whose root of smallest modulus is, like in the case of $R_\infty(z)$, equal to
  $`r$.

  Now let us show that the sequence $(\sigma_m)_{m \in \nat}$ of roots of polynomials
  $P_m(z)$ is decreasing (see Figure~\ref{fig:roots}) and that it converges to $`r$.
  \begin{figure}[t!h]
    \centering
    \includegraphics[scale=.5]{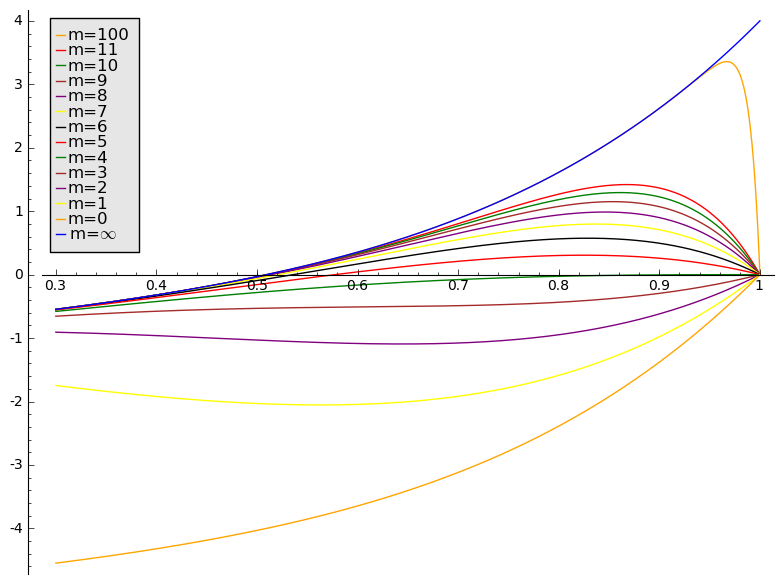}
    \caption{Roots of the $P_m$'s}
    \label{fig:roots}
  \end{figure}

  Notice that $P_m(z) = P_\infty(z) - 4z^{m + 4}$. Given a value $`z$ such that
  $`r<`z<1$ (for instance $`z=0.8$), $P_m(z)$ converges uniformly to $P_\infty(z)$ in
  the interval $[0,`z]$. Therefore $`s_m "->" `r$ when $m"->"\infty$. By $`s_m \leq
  `r_m \leq `r$, we get $`r_m "->" `r$, as well. Since all the $`r_m$'s are equal, we
  obtain that $`r_m =`r$ for every natural $m$.
\end{proof}

The above proposition leads immediately to the following result.

\begin{corollary}
  The number of closed binary $`l$-terms of size $n$ is of exponential order
  $(1/\rho)^n$, \ie
  \[ S_{0,n} \bowtie 1.963448^n.\]
\end{corollary}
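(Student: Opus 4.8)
The plan is to read off the corollary directly from the two propositions just proved, together with the transfer Fact recalled in Section~3 (the one stating that if $F(z)$ is analytic at $0$ and $R$ is the modulus of its singularity nearest the origin then $[z^n]F(z)\bowtie(1/R)^n$). In other words, all the real work has already been done; what remains is to check the hypotheses of that Fact for the specific function $\Ss_0(z)$ and to invoke it.

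Concretely, I would proceed in four short steps. First, observe that $\Ss_0(z)=\sum_{n}S_{0,n}z^{n}$ is analytic at the origin: its coefficients are non-negative integers with $S_{0,n}\le 2^{n}$ (a closed term of size $n$ is in particular a binary string of length $n$, as already noted right after Equation~\eqref{eq:1}), so the radius of convergence of $\Ss_0$ is at least $1/2$, hence positive. Second, by the first of the two preceding propositions all the generating functions $\Ss_m(z)$ have the same dominant singularity $\rho_0$, and by the second proposition $\rho_0=\rho\doteq 0.509308127$. Third, because $\Ss_0$ has non-negative coefficients, Pringsheim's theorem guarantees that the point $z=\rho_0$ on its circle of convergence is a genuine singularity (so $\Ss_0$ is not entire and $\rho_0$ is not a mere lower bound coming from a modulus estimate); thus the singularity of $\Ss_0$ nearest the origin has modulus exactly $\rho$. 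Fourth, apply the transfer Fact with $F=\Ss_0$ and $R=\rho$ to conclude
\[ S_{0,n}=[z^{n}]\Ss_0(z)\ \bowtie\ (1/\rho)^{n}, \]
and since $1/\rho\doteq 1.963448$ this is the claimed statement $S_{0,n}\bowtie 1.963448^{n}$.

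I do not expect a genuine obstacle here, since the substantial argument — the monotonicity of the operators $\mathcal{F}_m$, the comparison $\St_m(z)\le\Ss_m(z)\le\Vv(z)$, and the uniform convergence $P_m(z)\to P_\infty(z)$ on $[0,0.8]$ forcing $\sigma_m\to\rho$ and hence $\rho_m=\rho$ — lives in the two propositions. The only points that deserve a sentence of care are (i) that $\rho$ is an honest singularity of $\Ss_0$ rather than a removable point, which is exactly why the non-negativity of the coefficients (Pringsheim) is used, and (ii) that we are entitled only to the exponential order $\bowtie$ and not to a full asymptotic expansion: unlike $\Vv(z)$, the function $\Ss_0(z)$ is given by an infinite nest of radicals (Equation~\eqref{eq:1}) with no closed form, so the refined $n^{-3/2}$ correction obtained for $S_{\infty,n}$ in the previous theorem is not claimed for $S_{0,n}$ here.
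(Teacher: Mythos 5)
Your proposal is correct and follows essentially the same route as the paper, which derives the corollary "immediately" from the two propositions by combining $\rho_0=\rho$ with the transfer Fact $[z^n]F(z)\bowtie(1/R)^n$. The extra care you take (checking analyticity at $0$ via $S_{0,n}\le 2^n$ and invoking Pringsheim to certify that $\rho$ is a genuine singularity of $\Ss_0$) is a sound refinement of details the paper leaves implicit.
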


The number of closed terms of a given size cannot be greater than the number of all
terms. Therefore, we obtain what follows.

\begin{theo}
  The number of closed binary $`l$-terms of size $n$ is asymptotically of order
  \[S_{0,n} \sim \left( \frac{1}{\rho} \right)^n \cdot O \big( n^{-3/2} \big) \doteq
  1.963448^n \cdot O \big( n^{-3/2} \big). \]
\end{theo}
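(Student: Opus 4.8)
The plan is to sandwich $S_{0,n}$ between two quantities whose growth is already under control. First I would invoke the elementary inequality $S_{0,n} \le S_{\infty,n}$, valid for every $n$ (indeed $S_{m,n} \le S_{\infty,n}$ for all $m$, as noted in Section~5), which holds simply because every closed term is in particular a term. This reduces the upper bound to the first Theorem of the paper, which gives the precise asymptotics $S_{\infty,n} \sim (1/\rho)^n \cdot C\, n^{-3/2}$ with $C \doteq 1.021874073$.

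Next I would convert this into the desired $O$-statement. Since $S_{\infty,n} = (1/\rho)^n n^{-3/2}\bigl(C + o(1)\bigr)$, there is a constant $K$ (any $K>C$ works, for $n$ large enough) with $S_{0,n} \le S_{\infty,n} \le K\,(1/\rho)^n n^{-3/2}$, that is, $S_{0,n} = (1/\rho)^n \cdot O(n^{-3/2})$. To justify that $1/\rho$ is the genuine exponential base here, and not merely an upper bound for it, I would then appeal to the Corollary just proved, namely $S_{0,n} \bowtie (1/\rho)^n$, equivalently $\limsup_n S_{0,n}^{1/n} = 1/\rho$; this prevents the exponential factor from being replaced by anything strictly smaller, so $(1/\rho)^n \cdot O(n^{-3/2})$ is the sharpest statement of this shape that the two ingredients yield.

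Finally I would assemble the pieces: the Corollary pins the exponential order at $1/\rho$, while the comparison $S_{0,n} \le S_{\infty,n}$ together with the first Theorem supplies the polynomial factor $O(n^{-3/2})$; combining them gives exactly $S_{0,n} \sim (1/\rho)^n \cdot O(n^{-3/2}) \doteq 1.963448^n \cdot O(n^{-3/2})$.

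I do not expect a real obstacle: the argument is essentially a one-line sandwich built from results already in hand. The genuinely hard point, which this statement deliberately sidesteps, is pinning down the \emph{exact} subexponential factor for closed terms: one does not have a matching lower bound of the form $c\,(1/\rho)^n n^{-3/2}$, because the auxiliary functions $\St_m$ and the functionals $`F_m$ only locate the dominant singularity $\rho_0 = \rho$ and sandwich $\Ss_0(z)$ below by $\St_0(z)$ and above by $\Vv(z)$ — this controls the singularity's \emph{position} but not the leading coefficient of the singular expansion of $\Ss_0$. Hence the best estimate one can honestly record is the $O$-bound, and that is precisely what the theorem claims.
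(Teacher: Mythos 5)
Your proof is correct and is essentially the paper's own argument: the paper derives this theorem in one sentence from the comparison $S_{0,n}\le S_{\infty,n}$, the asymptotics of $S_{\infty,n}$, and the preceding corollary $S_{0,n}\bowtie(1/\rho)^n$, exactly as you do. Your closing remark about why only an $O$-bound (and not the exact subexponential constant) is obtainable also matches the paper, which leaves the sharper statement as a conjecture.
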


Figure \ref{fig:Smn_ran_n32} shows values $S_{m,n} \cdot `r^n \cdot n^{3/2}$ for a
few initial values of $m$ and $n$ up to $600$.

\begin{figure}[t!h]
  \centering
  \includegraphics[scale=.5]{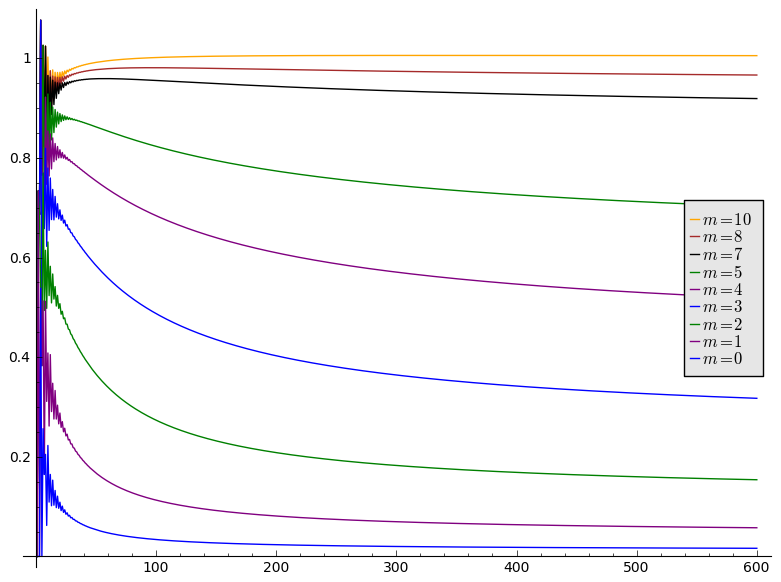}
  \caption{$S_{m,n} `r^n n^{3/2}$ up to $n=600$ for $m = 0$ to $10$}
  \label{fig:Smn_ran_n32}
\end{figure}

These numerical experiments allow us to state the following conjecture.

\begin{conj}
  For every natural number $m$, we have
  \[ S_{m,n} \sim 1.963448^n \cdot o \big( n^{-3/2} \big). \]
\end{conj}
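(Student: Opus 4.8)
The plan is to carry out a singularity analysis of $\Ss_m(z)$ near its dominant singularity $\rho$ — which by the two preceding propositions is the singularity of smallest modulus — and to read off the subexponential factor from the transfer theorems recalled above; the crux is to show that the singularity of $\Ss_m$ at $\rho$ is strictly milder than the square-root singularity of $\Vv$. I would work with $D_m(z) := \Vv(z) - \Ss_m(z) = \sum_{n\ge0}(S_{\infty,n}-S_{m,n})z^n$, which has nonnegative coefficients and decreases monotonically to $0$ as $m\to\infty$. Starting from~\eqref{eq:1} and its $m=\infty$ analogue, squaring and subtracting gives the exact identity
\[ D_m\,(U_m + U_\infty) \ = \ \frac{2z^{m+2}}{1-z} + 2z^2\,D_{m+1}, \qquad U_j := 1 - 2z^2\Ss_j = U_\infty + 2z^2 D_j ,\]
in which $U_\infty = 1-2z^2\Vv$ is the only non-analytic ingredient: it carries the square-root singularity of $\Vv$, with $U_\infty(\rho)=\rho^2$. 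Iterating (using $U_m+U_\infty = 2(U_\infty+z^2D_m)$ and $D_{m+N}\to0$) expresses $D_m$ as the convergent series $D_m = \frac{z^{m+2}}{1-z}\sum_{k\ge0}z^{3k}\big/\prod_{j=m}^{m+k}(U_\infty+z^2D_j)$, so every bit of non-smoothness of $D_m$ at $\rho$ is funnelled, through these denominators, from $U_\infty$ alone.

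Next I would expand near $\rho$ in the local variable $s=\sqrt{1-z/\rho}$: write $D_m = d_m + e_m\,s + O(s^2)$, $U_\infty = \rho^2 + v\,s + O(s^2)$ and $\Vv = \Vv(\rho) - a\,s + O(s^2)$, with explicit constants $v,a>0$ satisfying $v = 2\rho^2 a$. Evaluating the functional identity at $z=\rho$ gives $d_m(1+d_m) = \rho^m/(1-\rho) + d_{m+1}$ with $d_m\to0$, which determines the $d_m$ and yields $d_m\sim\rho^m/(1-\rho)^2$, in particular $\sum_m d_m<\infty$. Matching the coefficients of $s$ gives the first-order linear recursion
\[ \rho^2\,e_{m+1} \ = \ \rho^2(1+2d_m)\,e_m + v\,d_m , \]
whose general solution is $e_m = -a + (e_0+a)\prod_{j=0}^{m-1}(1+2d_j)$; note that $v=2\rho^2 a$ is exactly what makes the constant sequence $e_m\equiv-a$ a particular solution. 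Since $\Ss_m=\Vv-D_m$, the coefficient of $s$ in $\Ss_m$ equals $-a-e_m = -(e_0+a)\prod_{j=0}^{m-1}(1+2d_j)$, and by the transfer theorems this coefficient is nonzero precisely when $[z^n]\Ss_m$ is of exact order $(1/\rho)^n n^{-3/2}$. Hence the statement $S_{m,n}\sim 1.963448^n\cdot o(n^{-3/2})$ is equivalent to the single scalar identity $e_0 = -a$: the square-root part of $\Vv$ at $\rho$ has to be entirely absorbed into $D_0$, leaving $\Ss_m$ with no $(1-z/\rho)^{1/2}$ term for any finite $m$ and therefore with a leading singular term of strictly smaller order (e.g.\ $(1-z/\rho)^{3/2}$, or a square-root term damped by a slowly varying factor).

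The hard part is the boundary condition $e_0=-a$. The recursion for $e_m$ has a one-parameter family of solutions, and the obvious normalization ``$D_m\to0$'' does not by itself force $e_m\to0$: a pointwise — even monotone — limit of generating functions need not preserve the coefficient of $(1-z/\rho)^{1/2}$, so the square-root singularity of $\Vv$ may well be an \emph{emergent} feature of the infinite tower, present in the limit but in none of the finite truncations $\Ss_m^{(N)}$ (each of which is analytic at $\rho$, since the map carrying $\Ss_{j+1}$ to $\Ss_j$ in~\eqref{eq:1} is locally analytic there — the argument of its radical equals $\rho^4(1+2d_j)^2>0$). Establishing $e_0=-a$ therefore calls for a genuinely infinitary estimate: one must control the singular \emph{part} of $D_m$ near $\rho$, uniformly in $m$ — not merely its value $d_m=D_m(\rho)$, and not merely through the crude envelope $0\le D_m\le z^{m+2}/\big((1-z)(U_\infty-z^3)\big)$, whose own $s$-coefficient does tend to $0$ but does not transfer to $D_m$ — and show that the truncations $\Ss_m^{(N)}$ converge to $\Ss_m$ without ever contributing a $(1-z/\rho)^{1/2}$ term. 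The equivalent probabilistic reformulation, namely $S_{m,n}/S_{\infty,n}\to0$ (a uniformly random $\lambda$-term of size $n$ has more than $m$ distinct free indices with probability tending to $1$), appears to be no easier. This is exactly the gap the authors leave open, which is why the statement is recorded here as a conjecture rather than a theorem.
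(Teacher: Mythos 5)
This statement is stated in the paper as a \emph{conjecture}: the authors offer no proof, only the numerical evidence of Figure~\ref{fig:Smn_ran_n32} (the plots of $S_{m,n}\rho^n n^{3/2}$ up to $n=600$). So there is no proof of the paper's to compare yours against, and your own text correctly stops short of claiming one. What you have produced is a reduction of the conjecture to a single analytic claim, and that reduction is sound as far as it goes: the identity $D_m(U_m+U_\infty)=\frac{2z^{m+2}}{1-z}+2z^2D_{m+1}$ follows correctly from squaring Equation~\eqref{eq:1} and its limit version; the evaluation at $z=\rho$ (using $U_\infty(\rho)=\rho^2$, which is right since $\Vv(\rho)=\frac{1-\rho^2}{2\rho^2}$) gives $d_m(1+d_m)=\rho^m/(1-\rho)+d_{m+1}$ and hence $d_m\sim\rho^m/(1-\rho)^2$; and the relation $v=2\rho^2a$ does make $e_m\equiv-a$ a particular solution of the linearized recursion, so the conjecture becomes equivalent to the vanishing of the $(1-z/\rho)^{1/2}$-coefficient of $\Ss_0$ at $\rho$, i.e.\ $e_0=-a$. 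You correctly identify why the naive argument ``$D_m\to0$ hence $e_m\to0$'' is not only unjustified but would actually \emph{contradict} the numerics (it would force $e_0=a(1/P_\infty-1)\neq-a$ and give $S_{m,n}\asymp\rho^{-n}n^{-3/2}$ exactly); the conjecture demands precisely the ``emergent singularity'' scenario in which each $D_m$ retains the full square-root part of $\Vv$.

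Two further gaps should be flagged alongside the one you name. First, the Ansatz $D_m=d_m+e_ms+O(s^2)$ already presupposes that each $\Ss_m$, defined through an infinite tower of nested radicals, admits a Puiseux expansion in $s=\sqrt{1-z/\rho}$ at $\rho$; nothing in the paper establishes this (the singularity could a priori involve logarithmic or slowly varying factors, as you note in passing, in which case the coefficient matching is only formal). Second, even granting $e_0=-a$, passing from the absence of an $s^1$ term to $[z^n]\Ss_m=o(\rho^{-n}n^{-3/2})$ requires $\Delta$-analyticity of $\Ss_m$ near $\rho$ and control of the next singular term, neither of which is available. So the proposal is a useful and correct \emph{reformulation} of the open problem, not a proof of it; since the paper itself proves nothing here, that is an acceptable outcome, but it should be presented as such.
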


\section{Unrankings}
\label{sec:unrank}

The recurrence relation \eqref{eq:Smn2} for $S_{m,n}$ allows us to define the
function generating $`l$-terms. More precisely, we construct bijections $s_{m,n}$,
called \emph{unranking} functions, between all non-negative integers not greater than
$S_{m,n}$ and binary $`l$-terms of size $n$ with at most $m$ distinct free
variables~\cite{al.:_rankin}. This approach is also known as the \emph{‘recursive
  method’}, originating with Nijenhuis and Wilf~\cite{nijenhuis78:_combin} (see
especially Chapter~13).  In order to describe unranking functions, we make use of the
Cantor pairing function.

Let us recall that for $n\ge2$ we have, by \eqref{eq:Smn2},
\begin{eqnarray*}
  S_{m,n} &=& S_{m+1,n-2} + \sum_{j=0}^{n-2} S_{m,j} S_{m,n-2-j} + [m \ge n-1].
\end{eqnarray*}
The encoding function $s_{m,n}$ takes an integer $k \in \{1,\ldots,S_{m,n}\}$ and
returns the term built in the following way.
\begin{itemize}
\item If $m \geq n-1$ and $k$ is equal to $S_{m,n}$, the function returns the string
  $1^{n-1} 0$.
\item If $k$ is less than or equal to $S_{m+1,n-2}$, then the corresponding term is
  in the form of abstraction $00 \widehat{\Var{M}}$, where $\widehat{\Var{M}}$ is the
  value of the unranking function $s_{m+1,n-2}$ on $k$.
\item Otherwise (i.e., $k$ is greater than $S_{m+1,n-2}$ and less than $S_{m,n}$ if
  $m\ge n+1$ or less than or equal to $S_{m,n}$ if $m < n+1$) then the corresponding
  term is in the form of application $01 \widehat{\Var{M}} \widehat{\Var{N}}$. In
  order to get strings $\widehat{\Var{M}}$ and $\widehat{\Var{N}}$, we compute the
  maximal value $\ell \in \{0,\ldots,n-2\}$ for which
$$k - S_{m+1,n-2} = \sum_{j=0}^{\ell-1} S_{m,j}S_{m,n-2-j} +r \qquad \textrm{with~} r \le
S_{m,\ell} S_{m,n-2-\ell}.$$%
The strings $\widehat{\Var{M}}$ and $\widehat{\Var{N}}$ are the values
$s_{m,\ell}(k')$ and $s_{m,n- 2-\ell}(k'')$, respectively, where $(k',k'')$ is the
pair of integers encoded by $r$ by the Cantor pairing function.
\end{itemize}
In Figure~\ref{fig:unrankT-prog} the reader may find a \textsf{Haskell}
program~\cite{jones03:_haskel} which computes the values $s_{m,n}(k)$.  In this
program, the function $s_{m,n}(k)$ is written as \texttt{unrankT m n k} and the
sequence $S_{m,n}$ is written as \texttt{tromp m n}.

\begin{figure}[!th]
  \begin{center}
    \hrule\medskip
    \begin{footnotesize}
\begin{verbatim}
unrankT :: Int -> Int -> Integer -> Term
unrankT m n k
  | m >= n - 1 && k == (tromp m n) = Index $ fromIntegral (n - 1) -- terms 1^{n-1}0
  | k <= (tromp  (m+1) (n-2)) = Abs (unrankT (m+1) (n-2) k) -- terms 00M
  | otherwise = unrankApp (n-2) 0 (k - tromp  (m+1) (n-2)) -- terms 01MN
    where unrankApp n j h
            | h <=  tmjtmnj  = let (dv,rm) = (h-1) `divMod` tmnj
                               in App (unrankT m j (dv+1)) (unrankT  m (n-j) (rm+1))
            | otherwise = unrankApp n (j + 1) (h -tmjtmnj) 
            where tmnj = tromp m (n-j)
                  tmjtmnj = (tromp m j) * tmnj 
\end{verbatim}
    \end{footnotesize}

    \medskip \hrule
  \end{center}
  \caption{A \textsf{Haskell} program for computing values of the function $s_{m,n}$}
  \label{fig:unrankT-prog}
\end{figure}

\section{Number of typable terms}\label{sec:typable}

The unranking function allows us to traverse all the closed terms of size $n$ and to
filter those that are typable (see~\cite{hindley97:_basic_simpl_theor} and appendix)
in order to count them and similarly to traverse all the terms of size $n$ to count
those that are typable.  
\ifAMS
Figure~\ref{fig:nb_typ} left gives the number $\nbct{n}$
of closed typable terms of size $n$ and Figure~\ref{fig:nb_typ} right gives the
number $\nbt{n}$ of all typable terms of size $n$. 
\begin{figure}[!th]
  \begin{scriptsize}
    \begin{center}
      \begin{math}
        \begin{array}{|l|l|}
          \hline
          \mathbf{n} & \nbct{n}\\\hline\hline
          0&0\\\hline
          1&0\\\hline
          2&0\\\hline
          3&0\\\hline
          4&1\\\hline
          5&0\\\hline
          6&1\\\hline
          7&1\\\hline
          8&1\\\hline
          9&1\\\hline
          10&5\\\hline
          11&4\\\hline
          12&9\\\hline
          13&13\\\hline
          14&23\\\hline
          15&29\\\hline
          16&67\\\hline
          17&94\\\hline
          18&179\\\hline
          19&285\\\hline
          20&503\\\hline
          21&795\\\hline
          22&1503\\\hline
          23&2469\\\hline
          24&4457\\\hline
          25&7624\\\hline
          26&13475\\\hline
          27&23027\\\hline
          28&41437\\\hline
          29&72165\\\hline
          30&128905\\\hline
          31&227510\\\hline
          32&405301\\\hline
          33&715078\\\hline
          34&1280127\\\hline
          35&2279393\\\hline
          36&4086591\\\hline
          37&7316698\\\hline
          38&13139958\\\hline
          39&23551957\\\hline
          40&42383667\\\hline
          41&76278547\\\hline
          42&137609116\\\hline
          43&248447221\\\hline
          44&449201368\\\hline
          45&812315229\\\hline
          46&1470997501\\\hline
        \end{array}
      \end{math}
      \qquad\qquad \mbox{
        \begin{math}
          \begin{array}{|l|l|}
            \hline
            \mathbf{n} & \nbt{n}\\\hline\hline
            0&0\\\hline
            1&0\\\hline
            2&1\\\hline
            3&1\\\hline
            4&2\\\hline
            5&2\\\hline
            6&3\\\hline
            7&5\\\hline
            8&8\\\hline
            9&13\\\hline
            10&22\\\hline
            11&36\\\hline
            12&58\\\hline
            13&103\\\hline
            14&177\\\hline
            15&307\\\hline
            16&535\\\hline
            17&949\\\hline
            18&1645\\\hline
            19&2936\\\hline
            20&5207\\\hline
            21&9330\\\hline
            22&16613\\\hline
            23&29921\\\hline
            24&53588\\\hline
            25&96808\\\hline
            26&174443\\\hline
            27&316267\\\hline
            28&572092\\\hline
            29&1040596\\\hline
            30&1888505\\\hline
            31&3441755\\\hline
            32&6268500\\\hline
            33&11449522\\\hline
            34&20902152\\\hline
            35&38256759\\\hline
            36&70004696\\\hline
            37&128336318\\\hline
            38&235302612\\\hline
            39&432050796\\\hline
            40&793513690\\\hline
            41& 1459062947\\\hline
            42& 2683714350\\\hline
          \end{array}
        \end{math}}
    \end{center}
  \end{scriptsize}

  \caption{Number of typable terms}
  \label{fig:nb_typ}
\end{figure}
\else
\cite{GrygielLescanne-Binary} provides tables of the numbers
of typable terms of size $n$. 
\fi

Thanks to the unranking function, we can build a \emph{uniform generator of
  $`l$-terms} and, using this generator, we can build a \emph{uniform generator of simply
  typable $`l$-terms}, which works by sieving the uniformly generated plain terms
through a program that checks their typability (see for
instance~\cite{grygiel_lescanne_jfp}).  This way, it is possible to generate
uniformly typable closed terms up to size $450$ which is rather good since Tromp was
able to build a self interpreter\footnote{Which is not typable by definition!} for
the $`l$-calculus of size $210$.

\section{Conclusion}
\label{sec:concl}

We have shown that if we use the size yielded by the binary lambda
calculus~\cite{DBLP:conf/dagstuhl/Tromp06}, we get an exponential growth of the
number of $`l$-terms of size $n$ when $n$ goes to infinity.  This applies to closed
$`l$-terms, to $`l$-terms with a bounded number of free variables, and to all
$`l$-terms of size $n$. Except for the size of all $`l$-terms, the question of
finding the non-exponential factor of the asymptotic approximation of these numbers
is still open.  Since the generating functions are not standard, we were lead to
devise new methods for computing these approximations.  Beside, we describe unranking
functions (recursive methods) for generating $`l$-terms from which we derive tools
for their uniform generation and for the enumeration of typable $`l$-terms.  The
generation of random (typable) terms is limited by the performance of the generators
based on the recursive methods aka unranking which needs to handle huge number.
Boltzmann samplers~\cite{DBLP:journals/cpc/DuchonFLS04} should allow us to generate
terms of larger size.

\ifAMS
\appendix

\section{Types and typability}
\label{sec:typ}

Types determine whether $`l$-terms actually represent well-defined
functions~\cite{hindley97:_basic_simpl_theor}.  Here we focus on simple typable
terms, because simple typability is decidable.  Simple types are of two forms, either
variable types $`a$ or arrow types $`s"->"`t$:
\begin{displaymath}
  `s, `t \ ::= \ `a \mid `s "->" `t
\end{displaymath}
A context $`G = `t_n, \ldots, `t_1$ is a finite sequence of types which correspond to
declare that index $\Var{1}$ has type~$`t_1$, index $\Var{2}$ has type $`t_2$ etc. A
type judgment $`G "|-" M:`t$ says that in the context $`G$, the $`l$-term $M$ has
type $`t$.  To type a term we use inference rules:
\begin{displaymath}
  \prooftree
  \justifies  `t_n, \ldots, `t_i, \ldots `t_1 "|-" \Var{i}: `t_i
  \using Var
  \endprooftree
  \qquad\qquad
  \prooftree
  `G, `t "|-" M: `s 
  \justifies `G "|-" `l M : `t "->" `s
  \using Abs
  \endprooftree
  \end{displaymath}

  \begin{displaymath}
    \prooftree
    `G "|-" M: `s "->" `t \qquad `G"|-" P : `s
    \justifies `G "|-" MP: `t
    \using App
    \endprooftree
  \end{displaymath}
\begin{definition}[Typability]
  A term $M$ is \emph{typable} if there exists a context $`G$ and type $`s$ such that
  $`G "|-" M: `s$.
\end{definition}
Notice that an open term with $n$ free indices require a context of size $n$ to be
typable. Therefore a closed term requires an empty context to be typable. Moreover
checking typability is solving constraints, mostly constraints generated by rule
\emph{App}.  For instance, term $`l \Var{1} \Var{1}$ cannot be typed since $\Var{1}$
of type say $`s$ cannot be applied to the term $\Var{1}$ of type $`s$. For that it
should be of type $`s "->" `t$.  Similarly $`l `l \Var{2} \Var{1}\Var{1}$ of type
$(`a "->" `a "->" `b) "->" `a "->" `b$ cannot be applied to $`l\Var{1}$ of type $`g
"->"`g$.  Therefore $(`l `l \Var{2} \Var{1}\Var{1})\,`l\Var{1}$ is not typable.  We
also notice that typability can be described neither recursively nor structurally.

\bigskip
\fi
\end{document}

From the program for checking  typability of terms and from the unranking program we where able to compute the ratio of simply typable terms over terms as shown in Figure~\ref{fig:ratio}. The  computation was made on a poor laptop.  Until $35$ the computation was made  exhaustively because $S_{0,35} = 5\,977\,863$ and $S_{0,36} = 11\,148\,652$ and after using a Monte Carlo method.  
 \begin{figure}
   \centering
   \begin{small}
     \begin{\inftyath}
       \begin{array}{|c||c|c|c|c|c|c|c|c|c|c|c|c|c|c|c|c|c|c|c|c|c|c|c|c|c|c|c|c|c|}
         \hline
         n &1 & 2 & 3 & 4 & 5 & 6 & 7 & 8 & 9 & 10 & 11 &12 & 13 & 14 & 15 & 16 & 17 & 18
         &19 \\\hline\hline
         \mathfrak{T}(n) & 0 & 0 & 0 & 1 & 0 & 1 & 1 & 1 & 1 & 5 & 4 & 9 & 13 & 23 & 29 & 67 & 94 & 179
         & 285
         \\\hline
       \end{array}
       \end{\inftyath}
       \medskip

       \begin{\inftyath}
         \begin{array}{|c||c|c|c|c|c|c|c|c|c|c|c|c|c|c|c|c|c|c|}
           \hline
           n & 20 & 21 & 22 & 23 & 24 & 25 & 26 & 27 & 28 &29 \\\hline\hline
           \mathfrak{T}(n) & 503 & 795 & 1503 & 2469 & 4457 & 7624 & 13475 & 23027 &
           41437 & 72165 
           \\\hline
         \end{array}
       \end{\inftyath}
 
       \medskip

       \begin{\inftyath}
         \begin{array}{|c||c|c|c|c|c|c|c|c|c|c|c|c|c|c|c|c|c|c|}
           \hline
           n & 30 & 31 & 32 & 33 & 34 & 35 \\\hline\hline
           \mathfrak{T}(n)& 128905 & 227510 & 405301 & 715078 &1280127 & 2279393
           \\\hline
         \end{array}
       \end{\inftyath}
   \end{small}
   \caption{The numbers $\mathfrak{T}(n)$ of typable terms of size $n$}
   \label{fig:nb_typ}
 \end{figure}

\begin{figure}
  \centering
   \includegraphics[width=\columnwidth]{ratio_typables.png}
  \caption{Ratio of typable terms over all terms}
  \label{fig:ratio}
\end{figure}

\section{Weight $1$ for abstractions and applications}

Assume that now we take the following size for terms.
 \begin{eqnarray*}
    |\Var{n}| &=& n +1\\
    |`l M | &=& |M| + 1\\
    |M\,N| &=& |M| + |N|+1
  \end{eqnarray*}
Then
\begin{eqnarray*}
  T_{\infty,0} &=& 0\\
  T_{\infty,n+1} &=& [m\ge n+1] + T_{\infty+1,n} + \sum_{k=0}^n T_{\infty,k} T_{\infty,n-k}
\end{eqnarray*}
and
\begin{eqnarray*}
  T_{\infty,n+1} &=& 1 + T_{\infty,n+1} + \sum_{k=0}^n  T_{\infty,k} T_{\infty,n-k}
\end{eqnarray*}
So
\begin{eqnarray*}
  \Tt_\infty(z) &=& \frac{z}{1-z} + z\Tt_\infty(z) + z\Tt_\infty^2(z)
\end{eqnarray*}
hence 
\begin{displaymath}
  z\Tt_\infty^2(z) - (1-z) \Tt_\infty(z) +  \frac{z}{1-z} = 0
\end{displaymath}
and 
\begin{displaymath}
  `D = (1-z)^2 + \frac{4z^2}{1-z}
\end{displaymath}
so the singularity we look for is a root of polynomial
\begin{displaymath}
  (1-z)^3 - 4z^2
\end{displaymath}
which is about $0.2955977425221616$ with inverse $3.3829757679053585$.
We have
\begin{displaymath}
  \Tt_m = \frac{1 - \sqrt{1-4z^2(\frac{1-z^m}{1-z} + \Tt_{\infty+1})}}{2z}
\end{displaymath}
\end{document}

